\theoremstyle{plain}
\newtheorem{thm}{\protect\theoremname}
  \theoremstyle{plain}
  \newtheorem{lem}{\protect\lemmaname}
  \theoremstyle{remark}
  \newtheorem{rem}{\protect\remarkname}
  \providecommand{\lemmaname}{Lemma}
  \providecommand{\remarkname}{Remark}
\providecommand{\theoremname}{Theorem}
\begin{document}

\title{Emergence of the pointer basis through the dynamics of correlations}

\begin{abstract}
We use the classical correlation between a quantum system being measured and its measurement apparatus to 
analyze the amount of information being retrieved in a quantum measurement process. Accounting for decoherence
of the apparatus, we show that these correlations may have a sudden transition from a decay regime
to a constant level. 
This transition characterizes a non-asymptotic emergence 
of the pointer basis, while the system-apparatus can still be quantum correlated. We provide a formalization of the concept of emergence of a pointer basis in an apparatus subject to decoherence. This contrast of the pointer basis emergence to the quantum to classical transition is demonstrated in an experiment with polarization entangled photon
pairs.

\end{abstract}

\author{M. F. Cornelio}
\affiliation{Instituto de  F\'\i sica, Universidade Federal de Mato Grosso, 78060–900, Cuiab\'a - MT, Brazil}
\affiliation{Instituto de  F\'\i sica Gleb Wataghin, Universidade Estadual de Campinas, 13083-859, Campinas, SP, Brazil}

\author{O. Jim\'enez Far\'{\i}as}
\affiliation{Instituto de F\'{\i}sica, Universidade Federal do Rio de
Janeiro, Caixa Postal 68528, Rio de Janeiro, RJ 21941-972, Brazil}
\affiliation{Instituto de Ciencias Nucleares, Universidad Nacional Aut\'onoma de M\'exico
(UNAM), Apdo. Postal 70-543, M\'exico 04510 D.F.}
\author{F. F. Fanchini}
\affiliation{Departamento de F\'{\i}sica, Faculdade de Ci\^encias, Universidade Estadual Paulista, Bauru, SP, CEP 17033-360, Brazil}
\author{I. Frerot}
\affiliation{D\'epartement de Physique, Ecole Normale Sup\'erieure, 24, rue Lhomond. 
F 75231 PARIS Cedex 05. }
\author{G. H. Aguilar}
\affiliation{Instituto de F\'{\i}sica, Universidade Federal do Rio de
Janeiro, Caixa Postal 68528, Rio de Janeiro, RJ 21941-972, Brazil}
\author{M. O. Hor-Meyll}
\affiliation{Instituto de F\'{\i}sica, Universidade Federal do Rio de
Janeiro, Caixa Postal 68528, Rio de Janeiro, RJ 21941-972, Brazil}
\author{M. C. de Oliveira}
\affiliation{Instituto de  F\'\i sica Gleb Wataghin, Universidade Estadual de Campinas, 13083-859, Campinas, SP, Brazil}
\affiliation{Institute for Quantum Information Science, University of Calgary, Alberta T2N 1N4, Canada}

\author{S. P. Walborn}
\affiliation{Instituto de F\'{\i}sica, Universidade Federal do Rio de
Janeiro, Caixa Postal 68528, Rio de Janeiro, RJ 21941-972, Brazil}
\author{A. O. Caldeira}
\affiliation{Instituto de  F\'\i sica Gleb Wataghin, Universidade Estadual de Campinas, 13083-859, Campinas, SP, Brazil}
\author{P. H. Souto Ribeiro}
\affiliation{Instituto de F\'{\i}sica, Universidade Federal do Rio de
Janeiro, Caixa Postal 68528, Rio de Janeiro, RJ 21941-972, Brazil}

\maketitle

The measurement problem is at the core of fundamental questions of quantum physics and the quantum-classical boundary \cite{von,schr1,*schr2,*schr3}. One way to approach the classical limit 
is through the process of {\it decoherence} \cite{zurek1,*zurek2,*zurek3}, where
 a quantum measurement apparatus $\cal A$ interacts 
  with the system of interest $\cal S$.  The apparatus suffers decoherence through contact with the environment ($\cal E$) that collapses $\cal A$ into some classical set of {\it pointer states}, which are not altered by decoherence. 
The correlations between these states and the system are preserved, despite the dissipative decoherence process.
In this sense, decoherence selects the classical pointer states of $\cal A$, inducing a transition from quantum to classical states of the measurement apparatus.  The time scale associated with this transition is usually estimated by the decoherence half-life.    
In this work, we show that contrary to this idea, the pointer states can emerge in a well-defined
instant of time. This result is obtained by showing that the pointer basis emerges when the
classical correlation (CC) \cite{hender} between system and apparatus becomes constant.
 It emphasizes the importance of  CC
 in the investigation of the measurement process, 
even though the joint  $\cal SA$ state still has quantum features, as can be inferred by quantum discord \cite{oliv}. 
After the transition, measurements are repeatable being verifiable by other observers \cite{darwinism}, signalling the emergence of the pointer basis.
We demonstrate this behaviour experimentally using entangled photons \cite{almeida,ofarias}.

The discussion 
 starts by considering that a system 
$\cal S$ initially in a state $|\psi_s\rangle$ interacts with a measurement apparatus $\cal A$, so 
that they become entangled \cite{zurek1,*zurek2,zurek3,von,*schr1,*schr2,*schr3}.
The apparatus is in constant interaction with the environment $\cal E$, 
so that during the measurement process the composite system $\cal S+A+E$ evolves from the (uncoupled) initial state
$|\psi_s\rangle|A_0\rangle|E_0\rangle$ to 
$\sum _i c_i |s_i\rangle |A_i\rangle |E_i(t)\rangle,$
{where $|A_i\rangle$ are orthogonal and thus distinguishable states of the apparatus, and $|E_i(t)\rangle$ are the states of the environment, which are inaccessible to the observer.}
The reduced density matrix of the system and the apparatus becomes
\begin{equation}
\rho_{sa} = \sum_{i,j} c_i c_j^{\ast} \langle E_j(t) | E_i(t)\rangle |s_i\rangle |A_i\rangle \langle s_j| \langle A_j|,
\label{eqab}
\end{equation}
where $\langle E_j(t) | E_i(t)\rangle$, with $i \neq j$, are rapidly decaying time-dependent coefficients.Therefore, after a characteristic period of time known as the decoherence time $\tau_D$, the resulting state of $\cal S + A$ is well approximated by
\begin{equation}
\rho_{sa} = \sum_{i} |c_i|^2 |s_i\rangle |A_i\rangle \langle s_i| \langle A_i|,
\label{eqabc}
\end{equation}
for which  the states of the bases $\{|s_i\rangle\}$ and  $\{|A_i\rangle\}$ are classically
correlated. 
This correlation permits an observer to obtain information about $\cal S$ via measurements on $\cal A$. 
In this sense, it is said that the environment selects a basis set of classical pointer states $\{|A_i\rangle\}$ of the apparatus 
and the decoherence time $\tau_D$ is traditionally recognized as a reasonable estimate of the time necessary for the pointer basis to emerge \cite{zurek3,brune96,*monroe96,*Paz93}. 
However, is it correct to assume that $\tau_D$ is the necessary time for the information about $\cal S$ be accessible to a classical observer?

To answer this question, let us consider the amount of information one obtains about the quantum system by observing the apparatus.
This information can be quantified by the CC defined in Refs. \cite{oliv,hender}.  {It tells us how much information one can retrieve about a first quantum system $\cal S$ through measurements performed on a second system $\cal A$.}
It is defined as the difference between the entropy of the system $\cal S$ and the average entropy conditioned to the output of measurements on  $\cal A$ \cite{oliv,hender}:
\begin{equation}
J_{s|\{\Gamma^a_i\}}(\rho_{sa}) = S(\rho_s) - \sum_i p_i S(\rho_s^i|\Gamma_i^a),
\label{eqc}
\end{equation}
where $S$ is the von Neumann entropy and $\{\Gamma^a_i\}$ is a complete positive operator valued measure (POVM) that determines which measurement is performed on $\cal A$.   In addition, as $J_{s|\{\Gamma^a_i\}}$ depends on the measurement chosen $\{\Gamma^a_i\}$, the maximum over all measurements on $\cal A$ determines the total  CC:
\begin{equation}
J_{s|a}^{\max}(\rho_{sa}) = \max_{\{\Gamma_{i}^{a}\}} \left[S(\rho_s) - \sum_i p_i S(\rho_s^i|\Gamma_i^a)\right].
\label{eqd}
\end{equation}
{ For solely classical-correlated states, this definition is equivalent to the mutual information $I(\rho_{sa}) = S(\rho_s) + S(\rho_a) - S(\rho_{sa})$, while for most quantum-correlated states it is not.
The difference between $I(\rho_{sa})$ and $J_{s|a}^{\max}$ is the so-called quantum discord (QD) \cite{oliv},
$\delta_{s|a}(\rho_{sa}) = I(\rho_{sa}) - J^{\max}_{s|a}(\rho_{sa})$, which quantifies genuine quantum correlations. This includes correlations that can be distinct from entanglement.}

Once $J^{\max}_{s|a}$ quantifies the information an observer retrieves about the quantum system by measuring her apparatus, it is natural to expect that its dynamics under decoherence can help us to understand the measurement process. 
In the following, we state two theorems that characterize the dynamics of  $J^{\max}_{s|a}$ and $J_{s|\{\Pi_i^a\}}$ during any decoherence process.
In the following, when we refer to decoherence, we mean a channel with a well defined and complete pointer basis. The complete proofs of the theorems are given in the Supplementary Information \cite{sup}.

{\em Theorem 1:\,\, Let $\rho_{sa}$ be the state of system-apparatus at a given moment. If the apparatus is subject to a decoherence process that leads to the projectors on the pointer basis $\{\Pi_i^a\}$, then  $J_{s|\{\Pi_i^a\}}$ is constant throughout the entire evolution.}

The interpretation of this theorem is clear. The classical correlations between the quantum system and the pointer states remain constant during the decoherence process. Therefore, if an observer monitors the apparatus through the pointer basis, she will always get the same information about the quantum system, independent of time and the decoherence rate.
In the next theorem, we will show that this is a particular property of the pointer basis.

{\em Theorem 2:\,\, Let $\rho_{sa}$ be the state of the system-apparatus at a given moment. 
If the apparatus is subject to a decoherence process leading to the projectors on the pointer basis $\{\Pi_{i}^{a}\}$ 
and $J_{s|\{\Pi_{i}^{a}\}}>0$, {then either (i) $J_{s|a}^{\max}$ is constant and equal to $J_{s|\{\Pi_{i}^{a}\}}$, or (ii) $J_{s|a}^{\max}$ decays monotonically to value $J_{s|\{\Pi_{i}^{a}\}}$ in a finite time, remaining constant and equal to $J_{s|\{\Pi_{i}^{a}\}}$ for the rest of the evolution.}}

Theorem 2 shows that $J^{\max}_{s|a}$ displays two kinds of evolution: {(i) constant or (ii) decay, which are fundamentally different in terms of the measurement process.}
During the decay process (ii), $J^{\max}_{s|a}$ is maximized in a basis of non-classical states, 
given by superpositions of the pointer states. 
Thus, the information about $\cal S$ that is available in the apparatus $J^{\max}_{s|a}$ is obtained by observation of non-classical {(non-pointer)} states, and it decays because these states are affected by decoherence. When $J^{\max}_{s|a}$ is constant (i), it is obtained by measurements in a basis of classical (pointer) states 
 not altered by decoherence.

In the decay case (ii) the transition from a decaying function to a constant cannot be analytical. {The switch is} signaled by a point of non-analiticity in the behavior of $J^{\max}_{s|a}$ as function of time.    
The study of non-analytic points of this sort, usually referred to as ``sudden changes", have already been reported in the literature \cite{xu}, 
and {typically focus on sudden changes in the quantum discord.}
However, as we show {here}, a {compelling and fundamental physical interpretation can be obtained by focusing rather on the sudden changes of the classical correlations in the context of the measurement process. } 

Now, consider an observer who was given the task to describe the quantum system $\cal S$,  
and suppose she can measure the apparatus $\cal A$ in any basis at any instant during the decoherence process.
Before the sudden change transition, the maximum information she  obtains, $J^{\max}_{s|a}$, is a decaying function of  time.
However, a classical description of a quantum system cannot depend on the decoherence rate of $\cal A$.
That is, for repeated tests of some fixed state of $\cal S$, the apparatus must always provide
the same information, otherwise the apparatus is useless.  {So}, from the perspective of the correlations between system and apparatus, {a meaningful pointer basis} cannot emerge before the transition. Indeed, after the transition, the maximum information $J^{\max}_{s|a}$ that can be obtained about $\cal S$ is given by measurement of $\cal A$ in the pointer basis. Thus, the time instant at which the classical correlations $J^{\max}_{s|a}$ become constant can be viewed as the emergence time of the pointer basis. 
Moreover, it is an immediate consequence of Theorem 2 that 
after the transition of $J^{\max}_{s|a}$ from decay to the constant regime,
the QD, i. e. quantum correlation, decays faster.
Nevertheless, due to the analyticity of $I(\rho_{sa})$, QD vanishes 
only in the asymptotic limit. Therefore, we prove the conjecture that QD
has no sudden death \cite{ferraro} for these channels.

To illustrate 
the consequence of Theorems 1 and 2, consider a  bipartite two-level (or two qubit) system $\cal S + A$ 
where the apparatus $\cal A$ is affected by the environment in two distinct ways: a phase damping (PD) and an amplitude 
damping channel (AD) \cite{salles}.
We consider two initial states (See Fig. \ref{dynamics}) for the joint system $\cal S + A$, both defined by the density matrix:
\begin{equation}
\rho_{sa}  =
\begin{pmatrix}
c&0&0&w\\0&b&z&0\\0&z&b&0\\w&0&0&c
\end{pmatrix}.\label{matrix}
\end{equation}
This state is an incoherent mixture of four Bell states.
For initial state 1, we choose $c=0.4$, $b =0.1$, $z =0.1$, and $w=0.4$, and for initial state 2 we take $c=0.4$, $b=0.1$, $z=0.1$, and $w=0.15$.
For these two cases, and considering the effects of PD and AD on the apparatus, we can consider only two measurements to maximize $J^{\max}_{s|a}$ during the dissipative dynamics: the projectors on the eigenstates of $\sigma_x$ and $\sigma_z$, $\{\Pi_x^a\}$ and $\{\Pi_z^a\}$ \cite{chen2011}.
 Fig. \ref{dynamics} displays the evolution of $J_{s|\{\Pi^a_x\}}$ and $J_{s|\{\Pi^a_z\}}$ as function of the time dependent parameter $p=(1-e^{-\gamma t})$, where $\gamma$ is the half-life of the decoherence.
The maximum information available $J^{\max}_{s|a}$ is given by the larger of these two quantities.
Fig. \ref{dynamics}(a) shows evolution under the action of the PD channel, for initial state 1. In this case the classical correlation $J_{s|\{\Pi^a_z\}}$ is constant, while the CC defined by measurements in any other basis decay.  The selection of the pointer basis emerges in the maximization of $J^{\max}_{s|a}$ through a sudden transition, occurring at a finite time ($p < 1$), at which point $J^{\max}_{s|a}=J_{s|\{\Pi^a_z\}}$, remains constant in the asymptotic limit. {The $\sigma_z$ eigenstates thus form the pointer basis for the PD channel.   In Fig. \ref{dynamics}(b), the trivial case for the initial condition 2 is shown. There $J^{\max}_{s|a}=J_{s|\{\Pi^a_z\}}$ is constant and maximum, and always corresponds to measurement in the pointer basis of $\sigma_z$ eigenstates. 

\begin{figure} 
  \centering
   \includegraphics[width=9cm,height=4.55in,keepaspectratio]{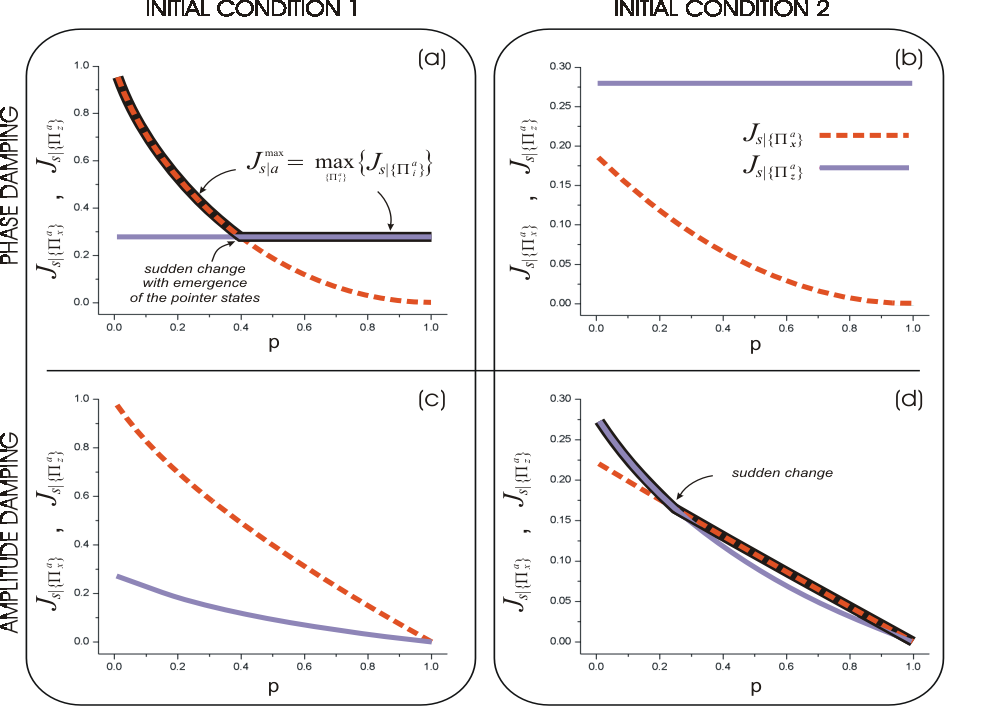}
  \caption{(Color online) Evolution of the quantity $J_{s|{\{\Pi_i^a\}}}$ calculated for
two different projectors: in the $\sigma_z$ basis (solid line) and in the $\sigma_x$ basis (dashed line).
}
\label{dynamics}
\end{figure}

Fig.\ref{dynamics}(c) shows the AD channel acting on initial state 1.  In this case,  $J^{\max}_{s|a}$ decays monotonically, since both correlations $J_{s|\{\Pi^a_x\}}$ and $J_{s|\{\Pi^a_z\}}$ decay and never cross. In Fig. \ref{dynamics}(d), we observe
that $J^{\max}_{s|a}$ suffers a sudden change, but continues to decay asymptotically. 
The behavior shown in Figs. \ref{dynamics}(c) and \ref{dynamics}(d) is due to the fact that the AD channel does not define a pointer basis.
Consequently, the CC for all possible bases of the apparatus $J_{s|\{\Pi^a_i\}}$ decay asymptotically to zero at some rate.
In other words, although there is a sudden transition from one basis to the other in the maximization of $J^{\max}_{s|a}$ in \ref{dynamics}(d), no preferred pointer basis of $\cal A$ is identified, since all correlations vanish asymptotically. The emergence of the pointer basis at a finite time can be attributed to the instant of time $\tau_E$ when a sudden transition occurs only when $J^{\max}_{s|a}$ remains constant after the transition.}
\par
For states like those in Eq. (\ref{matrix}), it is straightforward to find the emergence time $\tau_E$, namely when  $J_{s|\{\Pi^a_x\}}=J_{s|\{\Pi^a_z\}}$:
\begin{equation}
 \tau_E = \frac{1}{\gamma} \ln \left|\frac{z+w}{c-b} \right|.
\label{time} 
\end{equation}
Comparing this expression for $\tau_E$ with the decoherence half-life $\tau_D=1/\gamma$,
 shows that the pointer basis can emerge at times smaller or larger than $\tau_D$. 
This suggests that the decoherence half-life is not the best estimation for the 
{emergence of the pointer basis} of the apparatus $\cal A$. In particular, there are situations in which measurements in the pointer basis at time $\tau_D$  provide less information than other measurements.

We performed an experiment to investigate the emergence of the pointer basis using polarization entangled photons produced in parametric down-conversion, subject to a phase damping channel \cite{salles,almeida,*ofarias}. The experimental setup is outlined in Fig. \ref{setup}.

\begin{figure} 
  \centering
  \includegraphics[width=0.9\linewidth]{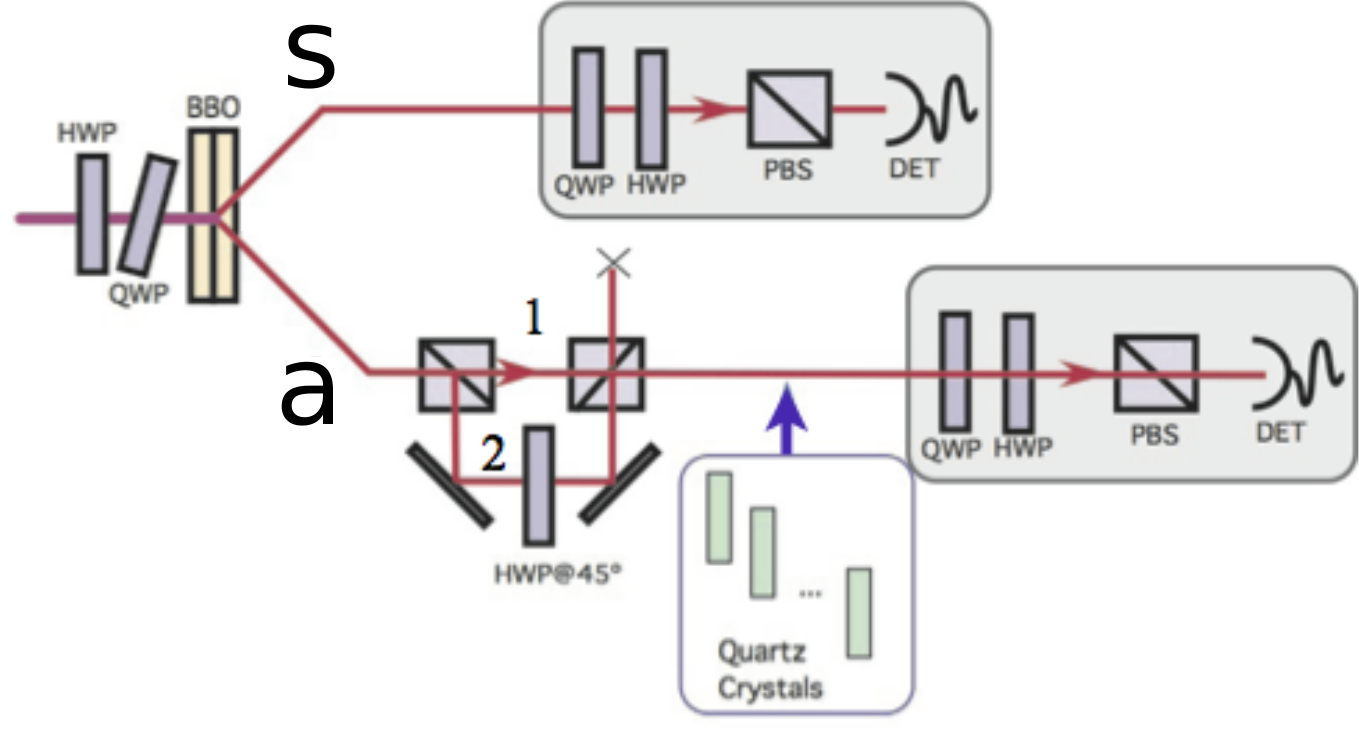}
  \caption{(Color online) Sketch of the experimental set-up. Polarization entangled photons
are generated  in a BBO crystal. States in the form of Eq. (\ref{matrix}) are produced using the
Mach-Zhender like scheme  in mode $a$, and dephasing channels are implemented
with birrefringent quartz crystals. Polarization state tomography is performed using the waveplates (QWP, HWP) and polarizing beam splitter (PBS).}
\label{setup}
\end{figure}

We produce polarization entangled photon pairs of the type
\begin{equation}
|\psi_1 \rangle = \alpha_1|H_s\rangle|H_a\rangle+\beta_1|V_s\rangle|V_a\rangle,
\label{exp1}
\end{equation}
where $|H\rangle$ and $|V\rangle$ are orthogonal polarization states, $\alpha_1$ and $\beta_1$ are complex coefficients, and the index $a$ and $s$ refer to apparatus and system respectively. The generation of the entanglement  between photons $s$ and $a$ represents the fast interaction between the system and the measurement apparatus.

To observe the sudden transition induced by decoherence, we produce an incoherent mixture of four Bell states, having the structure of Eq. (\ref{matrix}) \cite{xu}. To do so, the $s$ photon is sent directly to polarization analysis and detection, while the $a$ photon
is sent to an unbalanced Mach-Zehnder interferometer, as illustrated in Fig. \ref{setup}. Signal photons are split in modes $1$ and $2$ in a 50-50 beam splitter. Photons in mode 1 pass through the interferometer unchanged, while photons in mode $2$ propagate through a half wave plate which is oriented at 45$^{\circ}$. This operation switches polarization $|H\rangle$ to $|V\rangle$ and vice versa, producing the state
\begin{equation}
|\psi_2 \rangle = \alpha_2|H_s\rangle|V_{a2}\rangle+\beta_2|V_s\rangle|H_{a2}\rangle.
\label{exp2}
\end{equation}
The ratio, $\nicefrac{\alpha_i}{\beta_i}$, is controlled in the preparation of the initial state, while the ratio between the weights of the two Bell states is controlled with the neutral filter and the phase plate inside the interferometer.
 At the output, modes  1 and  2 are  recombined incoherently, due to the large relative path length difference.
 The state after the recombination is in the X-form of Eq. (\ref{matrix}).
 The coefficients can be controlled through the parameters of the initial state and those of the interferometer. We were able to produced states with fidelities as high as 0.98 compared to the target states.
The PD channel is implemented with successive 1mm long birefringent quartz crystals in mode $a$, which produces a relative delay between $H$ and $V$ polarization components.
When the temporal information is ignored, this corresponds to a dephasing channel.
Varying the total length of the quartz plates, we can control the
amount of decoherence induced, indexed by a parameter $p$,
ranging from $p = 0$ to $p = 1$ for which the coherence becomes zero.
The values of $p$ are obtained from process tomography, and the
classical correlations are obtained from the reconstructed density
matrix of the two-photon state.
From the reconstructed density matrix, we calculate the CC for different values of $p$.

\begin{figure} 
  \centering
  \includegraphics[width=0.9\linewidth]{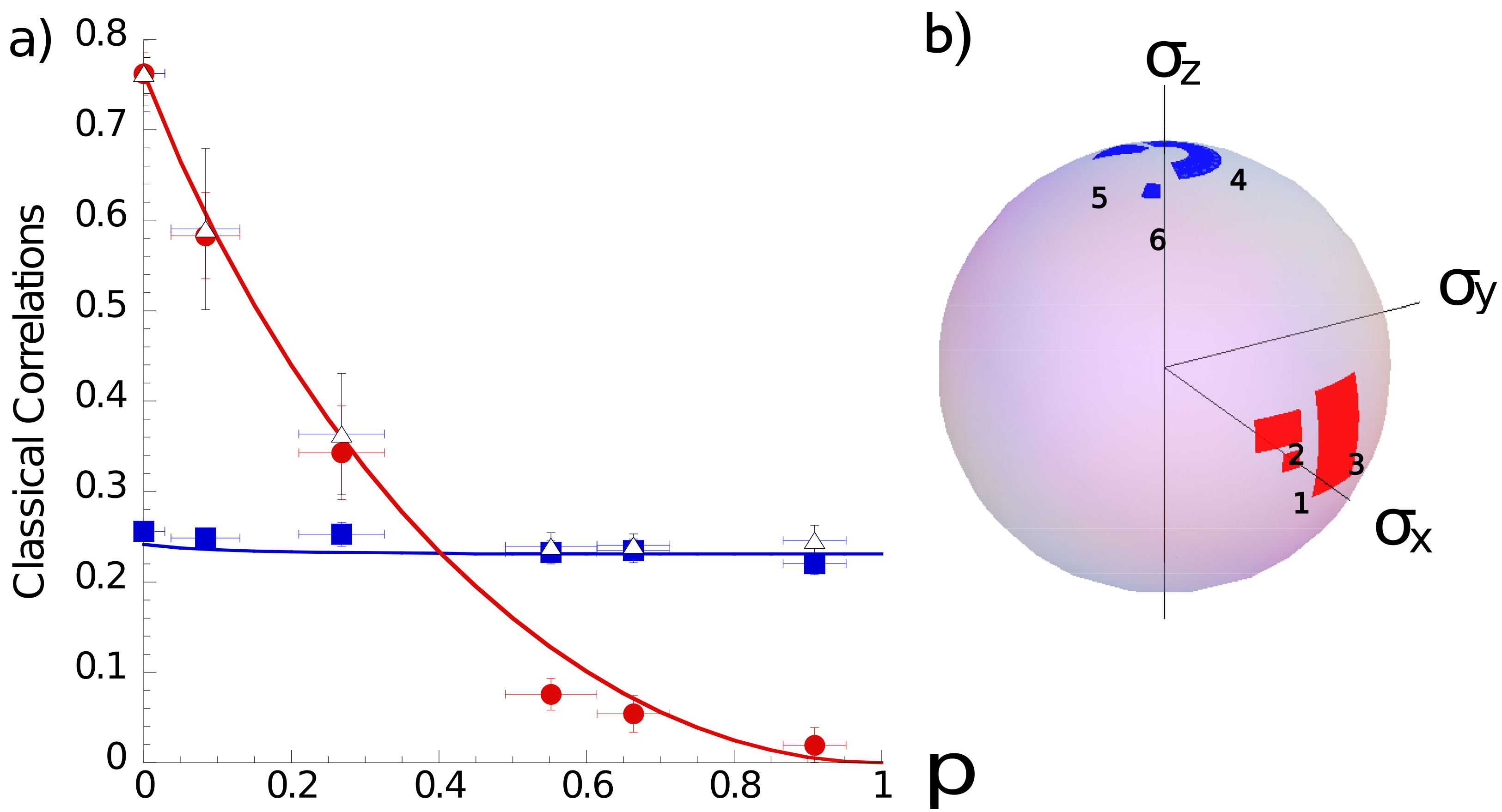} 
  \caption{(Color online) Experimental results for the phase damping channel showing emergence of the classical pointer states.  Blue squares and red circles correspond to the classical correlations $J_{s|\{\Pi^a_z\}}$ and $J_{s|\{\Pi^a_x\}}$, respectively.  Hollow triangles correspond to $J^{\max}_{s|a}$ obtained by maximizing over all possible projective measurements.  The inset shows the basis states used in obtaining $J^{\max}_{s|a}$ in the Bloch sphere, where it can be seen that the relevant basis jumps from $\sigma_x$ eigenstates to $\sigma_z$ eigenstates.  The solid lines correspond to theoretical calculations from the initial $p=0$ state.  Error bars were calculated using Monte Carlo simulations.}
  \label{results}
\end{figure}

Figure \ref{results} shows the measurement results for the CC,  $J_{s|\{\Pi^a_z\}}$  and $J_{s|\{\Pi^a_z\}}$ cross, after which $J^{\max}_{s|a}=J_{s|\{\Pi^a_z\}}=\mathrm{constant}$, signaling the appearance of the pointer basis for $p\geq 0.4$.  The maximum CC, obtained independently by maximizing $J^{\max}_{s|a}$ over all possible projective measurements, shows that either the $\sigma_x$ and $\sigma_z$ measurements optimize the available CC throughout the evolution.  The Fig. \ref{results}(b) shows the measurement directions in the Bloch sphere, where the jump from $\sigma_x$ to $\sigma_z$ occurs between the third and fourth data point.   The data shows that the pointer basis is selected well before the quantum discord vanishes, which occurs only in the asymptotic limit at $p=1$.

In conclusion, the measurement problem in quantum mechanics has been for long a difficult and debating subject. An enormous advance was obtained through the \emph{einselection} \cite{zurek1} program, shedding light on the obscure principles behind repeatability and predictability \cite{zurek1,darwinism} of measurements. Nonetheless, it has always been assumed that  \emph{the emergence of the pointer basis} for the apparatus and \emph{the emergence of the classicality} for the system+apparatus state were interchangeable phenomena. We have demonstrated in  Theorems 1 and 2 that this is not always true, with the formal definition for the emergence of the pointer basis in terms of the constancy of the classical correlation (CC) between system and apparatus. 
This constant profile for the correlations occurs whenever there exists projectors $\{\Pi_i^a\}$ commuting with the quantum map leading to the decoherence channel on the apparatus.
To understand why the constant behavior is required, we recall the fundamental principle that any measurement must be repeatable and verifiable by other observers -
according to the Copenhagen interpretation, reductions of the wave packet must return the same information any time and by any observers, a fact which is true only when the CC between system and apparatus is constant. In view of the interpretation given by  ``quantum Darwinism'' \cite{darwinism}, copies of the state of the apparatus will be reliable only when the CC is constant. In both ways the formal definition of the pointer basis, presented here, is suitable. 
We also observe that the time of the transition is usually different from the decoherence time.
Therefore, we must distinguish the classicality of the apparatus, which manifests itself through becoming a classical mixture of the pointer states due to decoherence, from the emergence of the pointer basis, which corresponds to another concept connected to repeatability and reproducibility of observations.
As we show in this work, these two phenomena are independent and usually occur at distinct stages of a measurement process.

\emph{Acknowledgements} - We acknowledge M. Piani and W. Zurek for fruitful discussions and comments on our manuscript.
Financial support for this work was provided by the Brazilian funding agencies CNPq,
CAPES, FAPESP and FAPERJ and the Instituto Nacional de Ci{\^e}ncia e Tecnologia - 
Informa\c{c}{\~a}o Qu{\^a}ntica (INCT-IQ). O.J.F was
funded by the Consejo Nacional de Ciencia y Tecnolog\'{\i}a, M\'exico. MCO acknowledges support by iCORE.

\section{Appendix: Proofs of the Theorems}

Here we show the proofs of the Theorems of the main paper. 
\begin{thm}
Let $\rho_{sa}$ be the state of system-apparatus at a given moment.
If the apparatus is subject to a decoherence process that leads to
a pointer basis $\{\Pi_{i}^{a}\}$, then $J_{s|\{\Pi_{i}^{a}\}}$
 is constant throughout the entire evolution.\end{thm}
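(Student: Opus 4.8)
The plan is to turn the hypothesis ``decoherence process leading to the pointer basis $\{\Pi_{i}^{a}\}$'' into a precise model: a one-parameter family of completely positive, trace-preserving maps acting on the apparatus alone, $\mathcal{E}_{t}=\mathrm{id}_{s}\otimes\Lambda_{t}$, such that every $\Lambda_{t}$ admits a Kraus representation $\Lambda_{t}(\cdot)=\sum_{\mu}K_{\mu}(t)(\cdot)K_{\mu}^{\dagger}(t)$ with all Kraus operators diagonal in the pointer basis, $K_{\mu}(t)=\sum_{i}k_{\mu i}(t)\ket{A_{i}}\bra{A_{i}}$; trace preservation then forces $\sum_{\mu}|k_{\mu i}(t)|^{2}=1$ for each $i$, i.e. $\Lambda_{t}$ merely damps the coherences $\ket{A_{i}}\bra{A_{j}}$ with $i\neq j$ while leaving the populations fixed. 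Equivalently, the pointer projectors are fixed by the adjoint map, $\Lambda_{t}^{\dagger}(\Pi_{i}^{a})=\Pi_{i}^{a}$, which is the precise sense in which $\{\Pi_{i}^{a}\}$ commutes with the channel. I would adopt this as the working definition, consistent with the remark made before the theorems and with the characterization given in the conclusion.

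With this in hand I would write $\rho_{sa}(t)=\mathcal{E}_{t}(\rho_{sa})$, set $P_{i}\equiv\mathds{1}_{s}\otimes\Pi_{i}^{a}$, and analyze $J_{s|\{\Pi_{i}^{a}\}}$ of Eq.~(\ref{eqc}) term by term. (1) Since $\Lambda_{t}$ is trace preserving and local to $\mathcal{A}$, the reduced state $\rho_{s}(t)=\mathrm{Tr}_{a}\rho_{sa}(t)=\mathrm{Tr}_{a}\rho_{sa}$ is $t$-independent, hence so is $S(\rho_{s})$. (2) The outcome probabilities satisfy $p_{i}(t)=\mathrm{Tr}[P_{i}\,\mathcal{E}_{t}(\rho_{sa})]=\mathrm{Tr}[(\mathds{1}_{s}\otimes\Lambda_{t}^{\dagger}(\Pi_{i}^{a}))\,\rho_{sa}]=\mathrm{Tr}[P_{i}\rho_{sa}]=p_{i}$, constant, using $\Lambda_{t}^{\dagger}(\Pi_{i}^{a})=\Pi_{i}^{a}$. (3) The crux: because $\Pi_{i}^{a}K_{\mu}(t)=k_{\mu i}(t)\Pi_{i}^{a}$, the unnormalized conditional state obeys $P_{i}\,\mathcal{E}_{t}(\rho_{sa})\,P_{i}=\sum_{\mu}|k_{\mu i}(t)|^{2}\,P_{i}\rho_{sa}P_{i}=P_{i}\rho_{sa}P_{i}$; tracing out $\mathcal{A}$ and dividing by $p_{i}$ gives $\rho_{s}^{i}(t)=\rho_{s}^{i}$, so $S(\rho_{s}^{i}|\Gamma_{i}^{a})$ is constant as well.

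Assembling (1)--(3), every term of $J_{s|\{\Pi_{i}^{a}\}}(\rho_{sa}(t))=S(\rho_{s})-\sum_{i}p_{i}S(\rho_{s}^{i}|\Gamma_{i}^{a})$ is independent of $t$, which is the theorem. The only genuinely delicate step is the preliminary one: pinning down the operational content of ``decoherence with complete pointer basis'' so that the diagonal Kraus structure -- equivalently, commutation of the $\Pi_{i}^{a}$ with $\Lambda_{t}$ at all times and for all decoherence rates -- is guaranteed; everything after that is the short computation above. It is worth stressing that nothing in the argument constrains the magnitudes or the time dependence of the coefficients $k_{\mu i}(t)$, which is exactly the physical statement that the information retrievable in the pointer basis is insensitive to how fast (or slowly) decoherence proceeds, and which sets up the contrast with Theorem~2.
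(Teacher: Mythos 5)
Your proof is correct and follows essentially the same route as the paper's: both reduce the claim to showing that each ingredient of $J_{s|\{\Pi_i^a\}}$ --- the local entropy $S(\rho_s)$, the outcome probabilities $p_i$, and the conditional states $\rho_s^i$ --- is separately time-independent, via the key identity $\Pi_i^a\,\mathcal{E}(\rho_{sa})\,\Pi_i^a=\Pi_i^a\rho_{sa}\Pi_i^a$. The only difference is that you derive this identity for any channel whose Kraus operators are diagonal in the pointer basis, whereas the paper uses the specific interpolation $\mathcal{E}_q(\rho_{sa})=(1-q)\rho_{sa}+q\sum_i\Pi_i^a\rho_{sa}\Pi_i^a$; yours is a mild generalization (the paper's map is the special case with Kraus operators $\sqrt{1-q}\,I$ and $\sqrt{q}\,\Pi_i^a$), and the stray $\Gamma_i^a$ in your last two displays should read $\Pi_i^a$.
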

\begin{proof}
By definition we have
\begin{equation}
J_{s|\{\Pi_{i}^{a}\}}=S(\rho_{s})-\sum_{i}p_{i}S(\rho_{i}^{s}|\Pi_{i}^{a}).\label{eq:Jpointer}
\end{equation}
 In addition, a dynamical map ${\cal E}$, representing a decoherence
process with pointer basis (PB) $\{\Pi^{i}\}$, may be written in the Kraus representation
as
\begin{equation}
{\cal E}_{q}(\rho_{sa})=(1-q)\rho_{sa}+q\sum_{i}\Pi_{i}^{a}\rho_{sa}\Pi_{i}^{a},\label{decMap}
\end{equation}
where $q$ is the probability that the measurement in the pointer
basis has happened. Usually, $q$ is an exponential function of time.
Now, the entropy $S(\rho_{s})$ in Eq. (\ref{eq:Jpointer}) cannot be
changed by a local operation on the apparatus. So, we have to prove
that the second term in Eq. (\ref{eq:Jpointer}) does not change with
$q$. Let us start by showing that each of the $S(\rho_{s}^{i}|\Pi_{i}^{a})$
does not change. By direct substitution of (\ref{decMap}) we have
\[
\Pi_{i}^{a}{\cal E}_{q}(\rho_{sa})\Pi_{i}^{a}=\Pi_{i}^{a}\rho_{sa}\Pi_{i}^{a}.
\]
That is, the state $\rho_{i}^{s}$ conditioned to the output $\Pi_{i}^{a}$
is independent of $q$ and, consequently, of time. So $S(\rho_{i}^{s}|\Pi_{i}^{a})$
also does not change. In addition we have
\[
p_{i}=\textrm{Tr}[\Pi_{i}^{a}{\cal E}_{q}(\rho_{sa})\Pi_{i}^{a}]=\textrm{Tr}[\Pi_{i}^{a}\rho_{sa}\Pi_{i}^{a}],
\]
which is also independent of $q$. So the probabilities $p_{i}$ in
Eq. (\ref{eq:Jpointer}) are constant. Therefore  $J_{s|\Pi_{i}^{a}}$
is constant.

\end{proof}
 Before proving Theorem 2, let us introduce and prove the following Lemma.
\begin{lem}
\emph{For a state $\rho_{sa}$ of the form
\begin{equation}
\rho_{sa}=\sum_{i}p_{i}\rho_{s}^{i}\otimes\Pi^{i},\label{eq:classico-quantico}
\end{equation}
 the maximization of $J_{s|a}$ is attained, and only attained, by the basis $\left\{ \Pi^{i}\right\}$.}\end{lem}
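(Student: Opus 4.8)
\emph{Proof strategy.} The plan is to reduce the optimization in the definition of $J_{s|a}$ to the concavity of the von Neumann entropy. Since $S(\rho_s)$ does not depend on the measurement, maximizing $J_{s|\{\Gamma_k^a\}}=S(\rho_s)-\sum_k q_k S(\rho_s^k|\Gamma_k^a)$ over POVMs $\{\Gamma_k^a\}$ on $\mathcal{A}$ is equivalent to minimizing the conditional term $C(\{\Gamma_k^a\})=\sum_k q_k S(\rho_s^k|\Gamma_k^a)$. First I would compute the post-measurement states of $\mathcal{S}$ explicitly using the classical-quantum form (\ref{eq:classico-quantico}). Writing $\Pi^i=\ket{i}\bra{i}$ and $\lambda_{ki}:=\bra{i}\Gamma_k^a\ket{i}\ge 0$, one gets the unnormalized state $\textrm{Tr}_a[(\mathds{1}_s\otimes\Gamma_k^a)\rho_{sa}]=\sum_i p_i\lambda_{ki}\rho_s^i$, the outcome probability $q_k=\sum_i p_i\lambda_{ki}$, and, from completeness $\sum_k\Gamma_k^a=\mathds{1}_a$, the identity $\sum_k\lambda_{ki}=1$ for every $i$. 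Hence every conditional state $\rho_s^k|\Gamma_k^a=q_k^{-1}\sum_i p_i\lambda_{ki}\rho_s^i$ is a convex mixture of the \emph{fixed} ensemble $\{\rho_s^i\}$ with weights $p_i\lambda_{ki}/q_k$ that sum to one over $i$.

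The optimality of $\{\Pi^i\}$ then follows immediately from concavity of $S$: for each $k$, $S(\rho_s^k|\Gamma_k^a)\ge\sum_i (p_i\lambda_{ki}/q_k)\,S(\rho_s^i)$; multiplying by $q_k$, summing over $k$, and exchanging the order of summation gives $C(\{\Gamma_k^a\})\ge\sum_i p_i S(\rho_s^i)$. The right-hand side is exactly $C(\{\Pi^i\})$, since measuring $\{\Pi^i\}$ on (\ref{eq:classico-quantico}) returns outcome $i$ with probability $p_i$ and leaves $\mathcal{S}$ in $\rho_s^i$. Therefore $\{\Pi^i\}$ minimizes the conditional entropy and maximizes $J_{s|a}$, so $J_{s|a}^{\max}=J_{s|\{\Pi^i\}}$.

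For the ``only attained'' part I would invoke strict concavity of the von Neumann entropy: $S(\sum_i\mu_i\sigma_i)=\sum_i\mu_i S(\sigma_i)$ with all $\mu_i>0$ forces the $\sigma_i$ to coincide. Assuming the conditional states $\rho_s^i$ are pairwise distinct and all $p_i>0$ (the nontrivial situation), equality in the chain above requires that for every $k$ with $q_k>0$ there is a single index $i(k)$ with $\lambda_{k,i(k)}>0$. Then $\bra{i}\Gamma_k^a\ket{i}=0$ for $i\neq i(k)$, and positivity of $\Gamma_k^a$ yields $\Gamma_k^a\ket{i}=0$ for all $i\neq i(k)$; since $\{\ket{i}\}$ is a basis, $\Gamma_k^a=c_k\Pi^{i(k)}$ with $c_k>0$, and completeness forces $\sum_{k:\,i(k)=i}c_k=1$. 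Thus the optimal measurement is the projective measurement in the basis $\{\Pi^i\}$, up to an irrelevant splitting of an outcome into several copies that produce identical conditional states.

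The main obstacle — and the only place requiring real care — is this uniqueness step: one must state precisely the equality condition for strict concavity of $S$ and deal with the degenerate cases, since if some $p_i=0$ or two conditional states coincide the optimal POVM is unique only up to merging the corresponding projectors, so the Lemma should be read with the implicit nondegeneracy assumption. Everything else is routine, because the form (\ref{eq:classico-quantico}) automatically makes the conditional states convex combinations of a fixed ensemble, after which concavity does all the work.
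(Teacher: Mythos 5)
Your proof is correct, but it takes a genuinely different route from the paper's. The paper argues via the Ollivier--Zurek characterization of zero basis-dependent discord: since the classical--quantum state (\ref{eq:classico-quantico}) is invariant under the projective measurement $\{\Pi^i\}$, one has $\delta_{s|\{\Pi_i^a\}}=0$ and hence $J_{s|\{\Pi_i^a\}}=I(\rho_{sa})$, which is the ceiling for $J_{s|a}$; for any other basis the state is not invariant, the discord is strictly positive, and $J$ falls strictly below $I$. Your argument instead computes the conditional states directly, observes that for any POVM they are convex mixtures of the fixed ensemble $\{\rho_s^i\}$ with weights $p_i\lambda_{ki}/q_k$, and lets concavity of the von Neumann entropy do the work, with strict concavity handling uniqueness. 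Each approach has advantages: the paper's is shorter but leans on an external iff-characterization, is phrased only for orthonormal bases, and simply asserts $\rho_{sa}\neq\sum_i\Gamma_i^a\rho_{sa}\Gamma_i^a$ for ``a different basis'' --- an assertion that actually fails in degenerate cases (e.g. if two conditional states $\rho_s^i$ coincide and the corresponding weights are equal, a rotated basis in that subspace also leaves the state invariant and attains the maximum). Your proof is self-contained, covers general POVMs --- which is what the definition in Eq.~(\ref{eqd}) actually optimizes over --- and makes the nondegeneracy hypothesis needed for the ``only attained'' clause explicit, including the correct residual freedom of splitting an outcome $\Gamma_k^a=c_k\Pi^{i(k)}$ into proportional copies. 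The one thing you should state precisely if you write this up is the equality condition for strict concavity of $S$ (equality in $S(\sum_i\mu_i\sigma_i)\ge\sum_i\mu_i S(\sigma_i)$ with $\mu_i>0$ forces all $\sigma_i$ equal), but that is a standard fact and your use of it is sound.
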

\begin{proof}
The lemma follows from the fact that the basis-dependent quantum discord
of a state $\sigma_{sa}$,	
\[
\delta_{s|\{\Gamma_{i}^{a}\}}(\sigma_{sa})=I(\sigma_{sa})-J_{s|\{\Gamma_{i}^{a}\}}(\sigma_{sa}),
\]
is zero if and only if the basis $\{\Gamma_{i}^{a}\}$ is such that
\cite{oliv} 
\[
\sigma_{sa}=\sum_{i}\Gamma_{i}^{a}\sigma_{sa}\Gamma_{i}^{a}.
\]

$(\Rightarrow)$ So using the basis $\left\{ \Pi^{i}\right\} $ for
$\rho_{sa}$, we get 
\[
\delta_{s|\{\Pi_{i}^{a}\}}=0
\]
 and $J_{s|\{\Pi_{i}^{a}\}}=I_{sa}$ which is the maximum possible
value for $J_{s|a}$.

$(\Leftarrow)$ Conversely, if we take a different basis $\{\Gamma_{i}^{a}\}$,
we have
\[
\rho_{sa}\neq\sum_{i}\Gamma_{i}^{a}\rho_{sa}\Gamma_{i}^{a}.
\]
So $\delta_{s|\{\Gamma_{i}^{a}\}}(\rho_{sa})>0$ and
\[
J_{s|\{\Gamma_{i}^{a}\}}(\rho_{sa})<I(\rho_{sa})=J_{s|\{\Pi_{i}^{a}\}}(\rho_{sa}).
\]
\end{proof}
\begin{thm}
Let $\rho_{sa}$ be the state of the system-apparatus at a given moment.
If the apparatus is subject to a decoherence process leading to the
pointer basis $\{\Pi_{i}^{a}\}$ and $J_{s|\{\Pi_{i}^{a}\}}>0$, then
either

(i) $J_{s|a}^{\max}$ is constant and equal to $J_{s|\{\Pi_{i}^{a}\}}$,
or

(ii) $J_{s|a}^{\max}$ decays monotonically to value $J_{s|\{\Pi_{i}^{a}\}}$
in a finite time, remaining constant and equal to $J_{s|\{\Pi_{i}^{a}\}}$
for the rest of the evolution.\end{thm}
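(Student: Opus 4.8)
The plan is to follow the evolution directly, using the Kraus form from the proof of Theorem~1: the decoherence channel is $\mathcal{E}_q(\rho_{sa})=(1-q)\rho_{sa}+q\,\Phi(\rho_{sa})$ with $\Phi(\rho_{sa})=\sum_i\Pi_i^a\rho_{sa}\Pi_i^a$ and $q=q(t)$ increasing from $q(0)=0$ to $q(\infty)=1$. I would first record four facts. (a) Since $\Phi^2=\Phi$ and $\Phi\circ\mathcal{E}_q=\mathcal{E}_q\circ\Phi=\Phi$, the maps compose as $\mathcal{E}_{q_2}=\mathcal{E}_r\circ\mathcal{E}_{q_1}$ with $1-q_2=(1-r)(1-q_1)$; thus for $q_2>q_1$ the later state is obtained from the earlier one by an extra \emph{local} channel $\mathcal{E}_r$ on $\mathcal{A}$. (b) $J_{s|a}^{\max}$ is non-increasing under local channels on the measured party: a POVM $\{\Gamma_i\}$ applied after a channel $\Lambda$ on $\mathcal{A}$ is reproduced by the POVM $\{\Lambda^\dagger(\Gamma_i)\}$ applied before it, and $S(\rho_s)$ is unchanged; hence $g(q):=J_{s|a}^{\max}(\mathcal{E}_q(\rho_{sa}))$ is non-increasing in $q$. (c) By Theorem~1, $J_{s|\{\Pi_i^a\}}(\mathcal{E}_q(\rho_{sa}))$ equals its initial value $J_{PB}:=J_{s|\{\Pi_i^a\}}(\rho_{sa})$ for all $q$, so $g(q)\ge J_{PB}$ throughout. (d) At $q=1$, $\mathcal{E}_1(\rho_{sa})=\Phi(\rho_{sa})=\sum_i p_i\,\rho_s^i\otimes\Pi_i^a$ has the classical--quantum form of the Lemma, so $g(1)=J_{s|\{\Pi_i^a\}}(\Phi(\rho_{sa}))=J_{PB}$ and, crucially, the pointer basis is the \emph{unique} optimal measurement there.

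These four facts already produce the dichotomy. If $g(0)=J_{PB}$ then $J_{PB}=g(1)\le g(q)\le g(0)=J_{PB}$ forces $g\equiv J_{PB}$: this is case~(i). Otherwise $g(0)>J_{PB}$, and it suffices to exhibit a finite $q^\star<1$ with $g(q^\star)=J_{PB}$; granting that, monotonicity and the lower bound give $J_{PB}\le g(q)\le g(q^\star)=J_{PB}$ for every $q\ge q^\star$, so $g$ stays pinned at $J_{PB}$ afterwards, which is case~(ii), with the ``sudden change'' located at $q^\star$. Since $J_{s|a}^{\max}$ is continuous in the state and $q\mapsto\mathcal{E}_q(\rho_{sa})$ is affine, $g$ is continuous and hence $g(q)\downarrow J_{PB}$; the whole content of case~(ii) is therefore to upgrade this limit to an \emph{equality attained before} $q=1$, i.e.\ to show that the pointer basis becomes a global optimum for $\mathcal{E}_q(\rho_{sa})$ at some finite time.

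This last step is the main obstacle, and I would attack it with compactness plus the rigid structure inherited from Theorem~1. Joint continuity of $(q,\{\Gamma_i\})\mapsto J_{s|\{\Gamma_i\}}(\mathcal{E}_q(\rho_{sa}))$ on $[0,1]$ times the compact set of rank-one measurements on $\mathcal{A}$, together with the strict optimality from the Lemma, gives: on each region $\{d(\Gamma,\Pi)\ge\epsilon\}$ away from the pointer basis there is $\eta_\epsilon>0$ with $J_{s|\{\Gamma_i\}}(\Phi(\rho_{sa}))\le J_{PB}-\eta_\epsilon$, hence by uniform continuity $J_{s|\{\Gamma_i\}}(\mathcal{E}_q(\rho_{sa}))<J_{PB}$ for all such $\Gamma$ once $q>1-\delta_\epsilon$; so close to $q=1$ the optimizer must lie in the $\epsilon$-ball around $\Pi$. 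Inside the ball I would expand to second order: by Theorem~1 the conditional states and weights at $\{\Pi_i^a\}$ are $q$-independent, which makes the gradient of $J_{s|\{\Gamma_i\}}(\mathcal{E}_q(\rho_{sa}))$ in the measurement vanish at $\{\Pi_i^a\}$ for \emph{every} $q$, while the Lemma makes the Hessian negative definite transverse to the trivial (phase/permutation) directions at $q=1$; a perturbation estimate in $1-q$ then keeps it negative semidefinite on an interval $q\in(1-\delta,1)$, so $\{\Pi_i^a\}$ is a local---and, with the exterior bound, global---maximum there. Hence $g(q)=J_{PB}$ for $q\in[q^\star,1)$ with $q^\star<1$. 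The delicate points I would treat carefully are (i) the vanishing-gradient (critical-point) property of $\{\Pi_i^a\}$ for all $q$---this is exactly the algebraic mechanism by which the classical correlation, unlike the analytic mutual information $I(\rho_{sa})$ (which attains its asymptote only at $q=1$, whence the no-sudden-death statement for the discord), is able to freeze at a finite instant---and (ii) the case of a degenerate Hessian, where one must push the expansion to higher order or invoke the explicit optimizer structure (for X-states, the $\sigma_x/\sigma_z$ reduction used in the illustrative example).
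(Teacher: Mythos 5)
Your facts (a)--(d), your treatment of case (i), and your reduction of case (ii) to ``exhibit a finite $q^\star<1$ at which the pointer basis becomes globally optimal'' all match the paper: it uses the same Kraus form $\mathcal{E}_q$, the same monotonicity of $J^{\max}_{s|a}$ under local channels on $\mathcal{A}$, Theorem~1 as the lower bound, and the Lemma stating that the asymptotic classical--quantum state $\sum_i p_i\rho_s^i\otimes\Pi_i^a$ is optimized only by $\{\Pi_i^a\}$. Where you diverge is the finite-time step. The paper argues per basis: for each fixed $\{\Gamma_i^a\}$ that initially beats $J_{s|\{\Pi_i^a\}}$, continuity/analyticity in $q$ plus the strict inequality at $q=1$ forces $J_{s|\{\Gamma_i^a\}}(\mathcal{E}_q(\rho_{sa}))$ to drop below $J_{s|\{\Pi_i^a\}}$ at some $q_\Gamma<1$, and it then asserts that the maximum over bases therefore reaches $J_{s|\{\Pi_i^a\}}$ at finite time. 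You correctly recognize that this is the hard point (the crossing times $q_\Gamma$ could accumulate at $1$ as $\Gamma\to\Pi$) and try to close it with compactness outside an $\epsilon$-ball plus a local expansion inside it.

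The gap is in that local expansion. Your claim that $\{\Pi_i^a\}$ is a critical point of $\Gamma\mapsto J_{s|\{\Gamma_i^a\}}(\mathcal{E}_q(\rho_{sa}))$ for \emph{every} $q$ does not follow from Theorem~1: Theorem~1 controls the \emph{value} of the functional at the point $\Gamma=\Pi$ as $q$ varies, not its derivative transverse to that point in measurement space. Writing $\rho_{sa}=\sum_{ij}R_{ij}\otimes\ket{i}\bra{j}_a$ in the pointer basis, a first-order rotation of the measurement away from $\Pi$ changes the unnormalized conditional states by terms proportional to the off-diagonal blocks $R_{ij}$ ($i\neq j$), which under $\mathcal{E}_q$ are $(1-q)R_{ij}$ and generically nonzero for all $q<1$; the resulting first derivative of $\sum_k p_kS(\rho_s^k)$ at $\Gamma=\Pi$ is $(1-q)$ times a quantity that vanishes only for states with special structure (such as the X-states with the $\sigma_x/\sigma_z$ symmetry used in the paper's example). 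Whenever that gradient is nonzero, there are bases arbitrarily close to $\Pi$ with $J_{s|\{\Gamma_i^a\}}>J_{s|\{\Pi_i^a\}}$ for every $q<1$, your interior step fails, and the argument only yields $J^{\max}_{s|a}\downarrow J_{s|\{\Pi_i^a\}}$ asymptotically rather than attainment at finite $q^\star$. So the criticality of $\{\Pi_i^a\}$ is not a ``delicate point to treat carefully'' but the entire content of the finite-time claim, and it needs an independent proof (or an extra hypothesis on $\rho_{sa}$); as it stands, your route does not establish case (ii), and neither, for the same underlying reason, does the paper's uniformity-free per-basis argument.
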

\begin{proof}
Firstly, let us consider the case where initially $J_{s|a}^{\max}(\rho_{sa})=J_{s|\{\Pi_{i}^{a}\}}(\rho_{sa})$.
Since the decoherence map acts only on the apparatus, it is a
local operation and as such cannot increase the value $J_{s|a}$
\cite{hender}. On the other hand, $J_{s|\{\Pi_{i}^{a}\}}$ is constant
by Theorem 1 and, as $J_{s|a}$ cannot by smaller than $J_{s|\{\Pi_{i}^{a}\}}$
by definition, $J_{s|a}$ cannot decrease. Thus it is constant and
equal to $J_{s|\{\Pi_{i}^{a}\}}$ proving (i).

Now, let us consider the case where initially $J_{s|a}^{\max}(\rho_{sa})>J_{s|\{\Pi_{i}^{a}\}}(\rho_{sa})$.
Then there are some other bases $\{\Gamma_{i}^{a}\}$ such that 
\begin{equation}
J_{s|\{\Gamma_{i}^{a}\}}(\rho_{sa})>J_{s|\{\Pi_{i}^{a}\}}(\rho_{sa}).\label{eq:inequality}
\end{equation}
We will show that, for every basis $\{\Gamma_{i}^{a}\}$ satisfying
(\ref{eq:inequality}), $J_{s|\{\Gamma_{i}^{a}\}}$ will decay to
a smaller value them $J_{s|\{\Pi_{i}^{a}\}}(\rho_{sa})$ in a finite
time. To see this, we notice that, under a decoherence process described
by Eq. (\ref{decMap}), the state of the system-apparatus in the asymptotic
limit is given by
\[
\rho_{asym}=\sum_{i}\Pi_{i}^{a}\rho_{sa}\Pi_{i}^{a}=\sum_{i}\rho_{s}^{i}\otimes\Pi_{i}^{a},
\]
where $\rho_{s}^{i}$ is the state of the system conditioned to the
output $\Pi_{i}^{a}$. Now we can see that $\rho_{asym}$ is of the
form (\ref{eq:classico-quantico}) and, by Lemma 1, $J_{s|\{\Gamma_{i}^{a}\}}(\rho_{asym})$
is strictly smaller than $J_{s|\{\Pi_{i}^{a}\}}(\rho_{asym})$. So
we have
\begin{equation}
J_{s|\{\Pi_{i}^{a}\}}(\rho_{sa})>J_{s|\{\Gamma_{i}^{a}\}}(\rho_{asym}).\label{eq:inequalityAsymp}
\end{equation}
Therefore, the dynamics of correlations in a basis $\{\Gamma_{i}^{a}\}$,
given by $J_{s|\{\Gamma_{i}^{a}\}}({\cal E}_{q}(\rho_{sa}))$, satisfies
inequality (\ref{eq:inequality}) for $q=0$ and inequality (\ref{eq:inequalityAsymp})
for $q=1$. As $J_{s|\{\Gamma_{i}^{a}\}}({\cal E}_{q}(\rho_{sa}))$
is an analytical function of $q$, it must satisfies (\ref{eq:inequalityAsymp})
also in a neighborhood of $q=1$ and $J_{s|\{\Gamma_{i}^{a}\}}({\cal E}_{q}(\rho_{sa}))$
can be equal to $J_{s|\{\Pi_{i}^{a}\}}(\rho_{sa})$ only for some
$q$ strictly smaller than 1 which represents a finite time.

The dynamics of $J_{s|a}$ is a maximization over all the bases $\{\Gamma_{i}^{a}\}$ and $\{\Pi_{i}^{a}\}$. So $J_{s|a}$ will decay
to the value $J_{s|\{\Pi_{i}^{a}\}}(\rho_{sa})$ in a finite time.
To see that this decay is monotonic, we notice that the decoherence
map in Eq. (\ref{decMap}) is a local operation on the apparatus alone and
thus cannot increase the classical correlation. 
\end{proof}

\begin{rem}
One might think that the condition $J_{s|\{\Pi_{i}^{a}\}}>0$ may not
be necessary, but it is. In fact, we can write down a state with $J_{s|\{\Pi_{i}^{a}\}}=0$
that, under a dephasing channel, $J_{s|a}$ decays to zero asymptoticly
without getting constant. For example, consider the state
\[
\rho=\frac{I}{4}+\frac{1}{4}\sigma_{x}\otimes\sigma_{x},
\]
where $I$ is the identity operator $\sigma_{x}$ is the usual Pauli
matrix. In this case, $J_{s|\{\Pi_{i}^{a}\}}=0$ where the PB is made
of the eigenvectors of the $\sigma_{z}$ operator, and $J_{s|a}$ decays
asymptotically to zero. Of course, this case has no meaning for the
measurement process since no correlation is preserved between the
system and the apparatus. \end{rem}

\end{document}